\newcommand{\aln}{&&\hskip -3pt}
\newcommand{\mj}{{\{m_j\}}}
\newcommand{\mjone}{{\{m_1\}}}
\newcommand{\mjtwo}{{\{m_2\}}}
\newcommand{\aem}{\alpha_{\scriptscriptstyle E\!M}}
\newcommand{\mur}{\mu_{\scriptscriptstyle R}}
\newcommand{\Lr}{L_{\scriptscriptstyle R}}
\newcommand{\bqa}{\begin{eqnarray}}
\newcommand{\eqa}{\end{eqnarray}}
\newcommand{\nl}{\nonumber \\}
\journalname{}
\begin{document}
\title{Matching high-energy electroweak fermion loops onto the Fermi theory without higher dimensional operators
}
\author{R. Pittau\thanksref{e1,addr1}
}

\thankstext{e1}{e-mail: pittau@ugr.es}

\institute {Departamento de F\'isica Te\'orica y del Cosmos and CAFPE, Universidad de Granada, Campus Fuentenueva s.n., E-18071 Granada, Spain \label{addr1}
}

\date{}

\maketitle

\begin{abstract}
We derive the conditions for matching high-energy renormalizable Quantum Field Theories onto low-energy nonrenormalizable ones by means of the FDR approach described in \cite{Pittau:2013ica}. Our procedure works order-by-order in the loop expansion and avoids the addition of higher dimensional interactions into the nonrenormalizable Lagrangian.
To illustrate our strategy, we match the high-energy fermion-loop corrections computed in the complete electroweak theory onto the nonrenormalizable four-fermion Fermi model.  As a result, the Fermi Lagrangian can be used without modifications to reproduce, at arbitrary loop orders and energies, the exact electroweak interactions between two massless fermion lines induced by one-fermion-loop resummed gauge boson propagators.
\keywords{Quantum Field Theory \and Renormalization \and Effective Theories \and Predictivity}
\PACS{11.10.Gh \and 11.25.Db \and 12.15.Lk \and 12.38.Bx}
\end{abstract}

\section{Introduction}
Renormalizable Quantum Field Theories (QFT) are the commonly used language to describe high-energy interactions in particle physics. They are considered as fundamental theories, in the sense that predictions can be obtained, at any desired perturbative order and scale, by consistently reabsorbing the ultraviolet (UV) infinities appearing in the intermediate stages of the calculation in the set $\{p_i\}$, $i= 1\div m$, of the free parameters of the Lagrangian
\bqa
{\cal L}(p_1,\ldots,p_m).
\eqa

On the other hand, nonrenormalizable QFTs belong to a larger class of theories, namely the effective QFTs (EFT), and are extensively employed in cases when the fundamental renormalizable model is unknown, or not easily calculable.
The problem of computing high-energy loop corrections in EFTs is usually dealt with by using the seminal Weinberg's approach \cite{Weinberg:1978kz}, in which higher dimensional operators $O_i$, compatible with the symmetries of the theory, are added to the lowest order Lagrangian $\cal L$ to reabsorb the UV infinities which remain after fixing the parameters of the model,
\bqa
{\cal L} \to {\cal L} + \sum_i C_i O_i := {\cal L} + {\cal L}_{\mbox{\tiny  HD}}. \nonumber  
\eqa
By doing so order-by-order in the loop expansion, EFTs can be treated as ordinary renormalizable QFTs at the price of introducing a large set of Wilson coefficients $C_i$ (possibly, an infinite one) to be fixed by experiment. Of course, not all the $C_i$ are relevant at the energy scale under study.
As a matter of fact, if $N$ is the number of independent kinematic invariants $s_n$, one organizes the EFT as a perturbative expansion in the ratios
\bqa
\label{eq:lambda}
\lambda_n=s_n/{M^2_n},\hskip 10pt {n=1 \div N},
\eqa
where the $M_n$ are mass scales parameterizing the range of validity of the effective description \cite{Wilson:1971bg,Wilson:1971dh}. In this way, physical predictions can be obtained, order-by-order in the $\lambda_n$, in terms of a finite set of measurements.

In \cite{Pittau:2013ica} a different way to include high-energy loop corrections in nonrenormalizable QFTs is presented based on FDR \cite{Pittau:2012zd}. In FDR UV divergences are eliminated by way of a redefinition of the loop integration that does not rely on an order-by-order renormalization. Hence, UV finite quantities are directly computed without adding ${\cal L}_{\mbox{\tiny  HD}}$ to ${\cal L}$. 
The price of this is the appearance of an arbitrary renormalization
scale $\mur$. In the case of renormalizable models, the dependence on $\mur$ disappears from physical predictions ${\cal O}^{\rm TH,\ell-loop}$,
\bqa
\label{eq:indmur}
\frac{d {\cal O}^{\rm TH,\ell-loop}\big(\tilde p_1(\mur),\ldots,\tilde p_m(\mur),\mur\big) }{d \mur}= 0,
\eqa
when they are expressed in terms of the set of parameters
$\{\tilde p_i(\mur)\}$ fixed by $m$ experiments ${\cal O}_i^{\rm EXP}$ determined up to the same perturbative order $\ell$ one is working,
\bqa
\label{eq:fit}
\tilde p_i(\mur):= p_i^{\rm TH,\ell-loop}({\cal O}_1^{\rm EXP}\!,\ldots,{\cal O}_m^{\rm EXP}\!,\mur),\,i=1 \div m. \nl
\eqa
On the contrary, \eqref{eq:indmur} is not fulfilled, in general, by nonrenormalizable QFTs.
However, in the procedure of \cite{Pittau:2013ica} $\mur$ is an adjustable parameter rather than a UV cutoff, \footnote{\label{foot:1}
  This means that, at any fixed value of $\mur$,
the nonrenormalizable Lagrangian ${\cal L}$ describes a legitimate effective theory, even without adding ${\cal L}_{\mbox{\tiny  HD}}$ to it.} so that an additional measurement ${\cal O}_{m+1}^{\rm EXP}$ can be used to fix it by imposing
\bqa
\label{eq:fix}
{\cal O}_{m+1}^{\rm TH,\ell-loop}\big(\tilde p_1(\mur^\prime),\ldots,\tilde p_m(\mur^\prime),\mur^\prime\big)= {\cal O}_{m+1}^{\rm EXP}.
\eqa
After this is done, observables different from those used to determine the model,
\bqa
{\cal O}_{i}^{\rm TH,\ell-loop}\big(\tilde p_1(\mur^\prime),\ldots,\tilde p_m(\mur^\prime),\mur^\prime\big),\hskip 10pt {i > m+1},
\eqa
can be predicted and tested experimentally. If in a given range of energy
\bqa
    {\cal O}_{i}^{\rm TH,\ell-loop}\big(\tilde p_1(\mur^\prime),\ldots,\tilde p_m(\mur^\prime),\mur^\prime\big)=  {\cal O}_{i}^{\rm EXP}
\eqa
for a large class of observables ${i > m+1}$, the nonrenormalizable QFT can be used as a plausible effective model.

In this work we study under which conditions a known renormalizable
theory can be matched onto a low-energy nonrenormalizable effective model by means of the FDR approach. In this case, the matching condition \eqref{eq:fix} is replaced by
\footnote{Here and in the following, amplitudes used to fix $\mur$ are denoted by the subscript $m+1$, while the label $i > m+1$ refers to processes different from those employed to determine the Lagrangian's parameters and the renormalization scale.}
\bqa
\label{eq:fix1}
B_{m+1}^{\rm \ell-loop}(\lambda,\alpha,\mur^\prime)= A_{m+1}^{\rm \ell-loop}(\lambda,\alpha),
\eqa
where $B_{m+1}$ and $A_{m+1}$ are amplitudes computed up to the $\ell^{th}$ order in the coupling constant $\alpha$ within the nonrenormalizable and renormalizable QFT, respectively, and $\lambda$ stands for all the $N$ ratios in \eqref{eq:lambda}.  In particular, we derive the conditions to be obeyed by the coefficients of the perturbative expansion of equation \eqref{eq:fix1} for ensuring the independence of $\mur^\prime$ from kinematics.
In addition, we conjecture that, when such a $\mur^\prime$ exists, additional independent amplitudes  can be matched at $\lambda \ne 0$,
\bqa
\label{eq:matchtheo}
B_i^{\rm \ell-loop}(\lambda,\alpha,\mur^\prime)= A_i^{\rm \ell-loop}(\lambda,\alpha),\hskip 10pt i> m+1,
\eqa
if they coincide at $\lambda=0$,
\bqa
\label{eq:matchtheo0}
B_i^{\rm \ell-loop}(0,\alpha,\mur^\prime)= A_i^{\rm \ell-loop}(0,\alpha),\hskip 10pt i> m+1.
\eqa
At the present stage of our investigation we cannot prove this in general. However, it holds true when the
$A_i$ are resummed one-fermion-loop amplitudes computed in the full electroweak theory and the $B_i$ are calculated in the four-fermion Fermi model.
In such a case, if $\mur^\prime$ is fixed once for all as in \eqref{eq:fix1}, the Fermi theory reproduces, at any loop order, all the exact amplitudes describing any process involving fermion-loop mediated interactions between two massless fermions at arbitrary energy scales.
This demonstrates that realistic low-energy nonrenormalizable QFTs exist that can be consistently uplifted to higher energies by FDR without modifying their Lagrangian, at least under special classes of loop corrections. 
Conversely, if nonrenormalizable and renormalizable amplitudes can be matched with a $\mur^\prime$ independent of kinematics, the coefficients of their expansions necessarily obey the same conditions which ensure the validity of \eqref{eq:fix1}.

The structure of the paper is as follows. In section \ref{sec:fdr} we recall the essential principles of FDR. The conditions for the 
matching in \eqref{eq:fix1} are derived in section \ref{sec:matchingamp}.
Section \ref{sec:effew} describes the one-fermion-loop matching of the high-energy electroweak corrections onto the Fermi model.
Finally, the last section includes a comparison between our procedure and a customary EFT approach.

\section{FDR integration and loop functions}
\label{sec:fdr}
Here we sketch out the basic axioms of FDR with the help of a simple
one-dimensional example. The interested reader can find more details in the relevant literature \cite{Pittau:2012zd,Donati:2013iya,Pittau:2013qla,Donati:2013voa,Page:2015zca,Page:2018ljf}.

Let's assume one needs to define the UV divergent integral
\bqa
\label{eq:integral}
I = \lim_{\Lambda \to \infty}\int_0^\Lambda dx \frac{x}{x+P},
\eqa
where $P$ stands for a physical energy scale.
FDR identifies the UV divergent pieces in terms of integrands independent of $P$, dubbed FDR vacua, and  rewrites
\bqa
\label{eq:integrand}
\frac{x}{x+P}= 1-\frac{P}{x}+\frac{P^2}{x(x+P)}.
\eqa
Thus, the first term in the r.h.s. of \eqref{eq:integrand} is the vacuum responsible for the linear UV divergence, while $1/x$ generates the $\ln \Lambda$ behavior.
By definition, the linearly divergent contribution is subtracted from \eqref{eq:integral} over the full integration domain
$[0,\Lambda]$, while the logarithmic divergence over the interval
$[\mur,\Lambda]$ only. The arbitrary separation scale $\mur \ne 0$ is needed to keep a-dimensional and finite the arguments of the logarithms appearing in the subtracted and finite parts.
Thus,
\bqa
\label{eq:fdrintegral1}
I_{\scriptscriptstyle \rm FDR} := I-\lim_{\Lambda \to \infty}\!\left(\int_0^\Lambda dx-
\int_{\mur}^\Lambda dx \frac{P}{x}\right)\!= P \ln\frac{P}{\mur}.
\eqa
The advantage of this definition is twofold. Firstly, the UV cutoff $\Lambda$ is traded for $\mur$, which is interpreted as the renormalization scale. Secondly, other than logarithmic UV divergences do not contribute.
The explicit appearance of $\mur$ in the interval of integration makes the use of \eqref{eq:fdrintegral1} inconvenient in practical calculations.  An equivalent definition is obtained by adding an auxiliary unphysical scale $\mu$ to $x$, $x \to \bar x:= x+\mu$, \footnote{This replacement must be performed in both numerators and denominators of the integrated functions.} and introducing an integral operator $\int_0^\infty [dx]$ which annihilates the FDR vacua before integration. Hence,
\bqa
I_{\scriptscriptstyle \rm FDR} = \int_0^\infty [dx]  
\frac{\bar x}{\bar x+P} :=  \left.\lim_{\mu \to 0} \int_0^\infty dx
\frac{P^2}{\bar x(\bar x+P)}
\right|_{\mu = \mur}, \nonumber
\eqa
where $\mu \to 0$ is an asymptotic limit.

This strategy can be extended to more dimensions and to rational integrands depending on any number of variables, as those appearing in $\ell$-loop integrals $I^\ell_{\scriptscriptstyle \rm FDR}$. They are polynomials of degree $\ell$ in
$\ln \mur^2$, \cite{Donati:2013voa} 
\bqa
\label{eq:polLr}
I^\ell_{\scriptscriptstyle \rm FDR}= \sum_{k=0}^\ell c_k \Lr^k,\hskip 10pt \Lr:= \ln(\mur^2). 
\eqa
For instance, at one loop one has
\begin{subequations}\label{eq:loopint}
\bqa
\label{eq:loopint:1}
\aln \int [d^4 q] \frac{1}{(\bar q^2-m^2)(\bar q^2+p^2 +2 q \cdot p -m_1^2)}  \nl
\aln \hskip 10pt =I^1_{\scriptscriptstyle \rm FDR}(p^2,m^2,m^2_1) =  -i \pi^2 \int_0^1 dy \ln\frac{\chi}{\mur^2},
\\
\label{eq:loopint:2}
\aln \int [d^4 q] \frac{q^\alpha}{(\bar q^2-m^2)(\bar q^2+p^2+2 q \cdot p -m_1^2)}\nl
\aln \hskip 10pt =i \pi^2 p^\alpha \int_0^1 dy y \ln\frac{\chi}{\mur^2},
\\
\label{eq:loopint:3}
\aln \int [d^4 q] \frac{q^\alpha q^\beta}{(\bar q^2-m^2)(\bar q^2+p^2 +2 q \cdot p -m_1^2)}\nl
\aln \hskip 10pt  =\frac{i \pi^2}{2} g^{\alpha \beta} 
\int_0^1 dy \chi \left(1-\ln\frac{\chi}{\mur^2}\right) + {\cal O}(p^\alpha p^\beta),
\eqa
\end{subequations}
with $\bar q^2 := q^2-\mu^2$ and $\chi := m^2y+m^2_1(1-y)-p^2y(1-y)$.
Finally, it is important to realize that internal consistency requires $\mur$ to be independent of kinematics and identical in all loop functions. This guarantees correct cancellations when combining integrals. \footnote{For example,
the UV finite combination 
$
I^1_{\scriptscriptstyle \rm FDR}(p_1^2,m^2,m^2_1)-I^1_{\scriptscriptstyle \rm FDR}(p_2^2,m^2,m^2_1)
$
is equal to the right result,
$$
\int d^4 q
\frac{2 q \cdot (p_2-p_1)+p^2_2-p^2_1}{(q^2-m^2)((q+p_1)^2-m^2_1)((q+p_2)^2-m^2_1)},
$$
only if $\mur^2$ in \eqref{eq:loopint:1} takes the same constant value in both
$I^1_{\scriptscriptstyle \rm FDR}(p_1^2,m^2,m^2_1)$ and $I^1_{\scriptscriptstyle \rm FDR}(p_2^2,m^2,m^2_1)$.
}

\section{The conditions for matching two amplitudes}
\label{sec:matchingamp}
Our aim is determining the renormalization scale $\mur^\prime$ in
\eqref{eq:fix1}.
The all-order expansions of
$A_{m+1}$ and $B_{m+1}$ read
\begin{subequations}\label{eq:eqAB}
\bqa
\label{eq:eqABa}
A_{m+1}(\lambda,\alpha)=\aln
K(\alpha)
+K(\alpha) \sum_{j=1}^{\infty} A_{0j}^\mj \lambda^\mj  \nl
\aln + K(\alpha)\!\!\sum_{i,j=1}^{\infty} A_{ij}^\mj \alpha^i \lambda^\mj, \\
\label{eq:eqABb}
B_{m+1}(\lambda,\alpha,\mur)=\aln
K(\alpha) \nl
\aln + K(\alpha)\!\!\sum_{\substack{i,j=1\\0 \leq k \leq i}}^{\infty}\! B_{ijk}^\mj \alpha^i \lambda^\mj \Lr^k, 
\eqa
\end{subequations}
where $K(\alpha)$  is defined by the constraint
\bqa
\label{eq:lemat}
B_{m+1}(0,\alpha,\mur) = A_{m+1}(0,\alpha)= K(\alpha),
\eqa
which states that the amplitudes computed in the exact theory and the effective model coincide when $\lambda \to 0$.
$A_{0j}^\mj$, $A_{ij}^\mj$, $B_{ijk}^\mj$ are perturbative coefficients,
in which $i$ refers to the $\alpha$ expansion, whereas $j$
denotes the power degree of the products of $\lambda_n$ multiplying the coefficients.
The notation $$\mj:=(m_{j1},m_{j2},\ldots,m_{jN})$$ symbolizes an assignment of  integer numbers $m_{jn} \ge 0 $ fulfilling
\bqa
\sum_{n=1}^N m_{jn}= j,
\eqa
and a sum over all possible assignments is understood when contracting with \footnote{For instance, if $N=2$,
$
  A_{02}^\mjtwo \lambda^\mjtwo =
  A_{02}^{(2,0)} \lambda_1^2
 +A_{02}^{(0,2)} \lambda_2^2
 +A_{02}^{(1,1)} \lambda_1 \lambda_2
$. 
}
\bqa
\lambda^\mj := \prod_{n=1}^N \lambda_n^{m_{jn}}.
\eqa
The coefficients in \eqref{eq:eqAB} may involve functions of $s_n$ singular at $\lambda= 0$, such as $\ln s_n$ or ${s_n}^{-\frac{1}{2}}$, \footnote{For example, if  the $\lambda_n \to 0$ asymptotic expansion of the loop functions produces a $\sqrt{\lambda_n}$, it is rewritten as $\sqrt{\lambda_n}= \lambda_n \left(M_n {s_n}^{-\frac{1}{2}} \right)$ in \eqref{eq:eqAB}.} but \eqref{eq:lemat} requires
\bqa
A_{0j}^\mj \lambda^\mj \to 0,\,
A_{ij}^\mj \lambda^\mj \to 0,\, 
B_{ijk}^\mj \lambda^\mj \to 0 \nonumber
\eqa
when $\lambda \to 0$.
Furthermore, $B_{m+1}$ in \eqref{eq:eqABb} depends on $\lambda$ only through loop corrections, unlike $A_{m+1}$. Typically, the second term in the r.h.s. of \eqref{eq:eqABa} is generated by Taylor expanding the tree-level propagators $1/(s_n-M^2_n)$ of the exact theory, that are absent in the effective model, whose natural expansion parameters are, instead, dimensionful couplings of the type
$\alpha^a/(M_n^2)^b$ with $a,b>0$. Note also that the dependence upon $\mur$ is driven by \eqref{eq:polLr}.

Solutions to \eqref{eq:fix1} are found by replacing its two sides by
\eqref{eq:eqABa} computed with $(i \le \ell, j \le \ell)$ and
\eqref{eq:eqABb} truncated at $(i \le \ell+1,j \le \ell, k \ge i-\ell)$,
and allowing $\Lr$ in \eqref{eq:eqABb} to mix different perturbative orders,
\bqa
\label{eq:Lpert}
\Lr= \sum_{i= -1}^{\ell -1} X_i \alpha^i.
\eqa
Equating the powers of $\alpha$ and $\lambda^\mj$ gives a system of equations to be fulfilled by the unknown coefficients $X_i$. We are interested in constant solutions,
\bqa
\label{eq:musol0}
\Lr^\prime := \ln\big({\mur^\prime}^2\big)= \sum_{i= -1}^{\ell -1} X^\prime_i \alpha^i,
\eqa
in which the $X^\prime_i$ are independent of both the $\lambda_n$ and the $s_n$ contained in $A_{0j}^\mj$, $A_{ij}^\mj$, $B_{ijk}^\mj$. This requirement determines the conditions to be fulfilled by the coefficients of the two series in \eqref{eq:eqAB} to be compatible with the FDR treatment of the loop integrals outlined in section \ref{sec:fdr}. In what follows, we discuss the first two perturbative orders and delineate the structure of the general  $\ell$-loop case.

When $\ell= 1$, $\Lr= X_{-1}/\alpha+ X_0$ and the system reads
\bqa
\label{eq:sis1}
\left\{
\begin{tabular}{l}
  \!\!$A_{01}^\mjone-B_{111}^\mjone X_{-1}-B_{212}^\mjone X^2_{-1}= 0,$
  \\\\
  \!\!$A_{11}^\mjone-B_{110}^\mjone-B_{111}^\mjone X_0-B_{211}^\mjone X_{-1}$
  \\$-2 B_{212}^\mjone X_{-1} X_0 = 0, \hskip 85pt \forall \mjone$. 
\end{tabular} \right. 
\eqa
If $N=1$, only one assignment is possible, $\mjone= (1)$, and a solution compatible with \eqref{eq:sis1} can always be found for nonexceptional values of the coefficients,
\bqa
&&{\hat X}^2_{-1} B_{212}^{(1)}+ {\hat X}_{-1} B_{111}^{(1)}  -A_{01}^{(1)}= 0, \nl
&& \hat X_{0}= \frac{A_{11}^{(1)}-B_{110}^{(1)}
  -B_{211}^{(1)} \hat X_{-1}}{B_{111}^{(1)}+2 B_{211}^{(1)} \hat X_{-1}}. 
\eqa
If, in addition, this solution is such that
\bqa
\label{eq:inds1}
\frac{\partial \hat X_{i}}{\partial s_n}= 0\hskip 10pt\forall n, \hskip 5pt i=-1,0,
\eqa
then
\bqa
\label{eq:solp1}
X^\prime_{i}= \hat X_{i}, \hskip 5pt i=-1,0.
\eqa
With $N$ invariants, there are $N$ possible assignments,
$$\mjone= (1,0,\ldots,0),(0,1,\ldots,0),\ldots,(0,0,\ldots,1),$$
so that \eqref{eq:sis1} is a system of $2N$ equations and two unknowns, that admits solutions only if relations exist among the coefficients. In practice, one
determines $\hat X_{-1}$ and $\hat X_0$ for a particular assignment and checks whether this solution obeys \eqref{eq:sis1} $\forall \mjone$. After that, one
also verifies the validity of \eqref{eq:inds1}.
Thus, \eqref{eq:sis1} and \eqref{eq:inds1}
give $4N$ conditions. If they are all obeyed, the matching 
\bqa
\label{eq:1loopmatching}
B_{m+1}^{\rm 1-loop}(\lambda,\alpha,\mur^\prime)= A_{m+1}^{\rm 1-loop}(\lambda,\alpha)
\eqa
is realized by inserting \eqref{eq:solp1} in \eqref{eq:musol0} with
$\ell= 1$.

If $\ell= 2$, $\Lr= X_{-1}/\alpha+ X_0+ X_1 \alpha$, and 
\begin{subequations}{\label{eq:sis2tot}}
\bqa
\label{eq:sis2}
&&\left\{
\begin{tabular}{l}
$A_{0j}^\mj-B_{1j1}^\mj X_{-1}-B_{2j2}^\mj X_{-1}^2$\\$-B_{3j3}^\mj X_{-1}^3= 0$, \\\\
$A_{1j}^\mj-B_{1j0}^\mj-B_{1j1}^\mj X_0-B_{2j1}^\mj X_{-1}$ \\
    $-2 B_{2j2}^\mj X_{-1} X_0 -B_{3j2}^\mj X^2_{-1}$\\$-3 B_{3j3}^\mj X^2_{-1} X_0 = 0$, \\\\
$A_{2j}^\mj-B_{2j0}^\mj-B_{1j1}^\mj X_1-B_{2j1}^\mj X_0 $ \\
  $-B_{2j2}^\mj (X_0^2+2 X_{-1} X_1)$\\
  $-B_{3j1}^\mj X_{-1}-2 B_{3j2}^\mj X_{-1} X_0 $\\
  $-3B_{3j3}^\mj (X_{-1} X_0^2+X^2_{-1} X_1)= 0$, 
\end{tabular} \right. \\\nonumber \\
\label{eq:asssis2}
&&\hskip 14pt \forall \mj,~{\rm with}~j= 1 \div 2. 
\eqa
\end{subequations}
Values of $\hat X_{-1}$, $\hat X_{0}$ and $\hat X_1$ fulfilling
\eqref{eq:sis2} can in general be found for a particular assignment. Subsequently, one checks if
\bqa
\label{eq:inds2}
\frac{\partial \hat X_{i}}{\partial s_n}= 0\hskip 10pt\forall n,
\hskip 5pt i=-1 \div 1,
\eqa
and whether this very same solution holds for all the remaining assignments of \eqref{eq:asssis2}.
Therefore, 
\eqref{eq:sis2tot} and \eqref{eq:inds2} give the conditions for the matching 
\bqa
\label{eq:2loopmatching}
B_{m+1}^{\rm 2-loop}(\lambda,\alpha,\mur^\prime)= A_{m+1}^{\rm 2-loop}(\lambda,\alpha).
\eqa
If they are met, \eqref{eq:2loopmatching} is obeyed by setting
$\ell= 2$ and $X^\prime_{i}= \hat X_{i}$ in \eqref{eq:musol0}.

At $\ell$ loops and fixed assignment, $\hat X_{-1}$ is a solution of an algebraic equation of degree $(\ell+1)$. Once $\hat X_{-1}$ is known, the rest of the system is linear and triangular, so that the remaining coefficients $\hat X_{i}$,
$i= 0 \div (\ell -1)$, can be easily determined. After that, one checks the validity of this solution for all the other assignments. If, in addition,
\bqa
\frac{\partial \hat X_{i}}{\partial s_n}= 0\hskip 10pt\forall n,
\hskip 5pt i=-1 \div (\ell -1),
\eqa
the matching is achieved by choosing $X^\prime_{i}= \hat X_{i}$ in \eqref{eq:musol0}.

\section{An effective model for the high-energy electroweak fermion loops}
\label{sec:effew}
When the constraints derived in the previous section are fulfilled, the result predicted by $A_{m+1}$ is reproduced, order by order in $\alpha$ and $\lambda$, by the effective nonrenormalizable amplitude $B_{m+1}$. This allows one to determine $\mur^\prime$ from \eqref{eq:fix1} and use it in further amplitudes $B_i$ computed within the effective model.
If, after fixing the Lagrangian's parameters as in \eqref{eq:fit}, the $B_i$ obey \eqref{eq:matchtheo0}, we argue that they can matched as in
\eqref{eq:matchtheo}. Here we prove this in the case of the electroweak Fermi model when the coupling constant expansion is in terms of resummed one-fermion-loop corrections. In section \ref{sec:models} we detail the nonrenormalizable and renormalizable theories to be matched and the radiative corrections involved.
The fitting procedure of \eqref{eq:fit} is discussed
in section \ref{sec:renormalization} and the matching implied by \eqref{eq:fix1} and \eqref{eq:matchtheo} is the subject of section \ref{sec:matching}. 

\subsection{The models and the loop corrections}
\label{sec:models}
Our renormalizable theory is defined by the fermionic sector of the electroweak standard model interaction Lagrangian, namely
\bqa
\label{eq:Lren}
    {\cal L}^{\mbox{\tiny  SM}}_{\mbox{\tiny INT}}=
   {\cal L}^{\mbox{\tiny  QED}}_{\mbox{\tiny INT}}+{\cal L}^{\mbox{\tiny  ZW}}_{\mbox{\tiny INT}},
\eqa
with
 \bqa
 {\cal L}^{\mbox{\tiny  QED}}_{\mbox{\tiny INT}}=\aln
 -g s_\theta A_\alpha \sum_f Q_f \bar f_j \gamma^\alpha f_j
 \eqa
 and
 \bqa
 \label{eq:Lrenint}
    {\cal L}^{\mbox{\tiny  ZW}}_{\mbox{\tiny INT}}=\aln
    -\frac{g}{2 c_\theta} Z_\alpha\sum_f
    \bar f_j \gamma^\alpha (v_f+a_f \gamma_5) f_j \nl
    \aln
    -\frac{g}{2 \sqrt{2}} W^+_\alpha\sum_f \frac{2I_{3f}+1}{2}
    \bar f_j \gamma^\alpha (1-\gamma_5) f^\prime_j \nl
    \aln
    -\frac{g}{2 \sqrt{2}} W^-_\alpha\sum_f \frac{1-2I_{3f}}{2}
        \bar f_j \gamma^\alpha (1-\gamma_5) f^\prime_j.
\eqa
The photon and the massive gauge boson fields are denoted by
$A_\alpha$, $Z_\alpha$ and $W^{\pm}_\alpha$, respectively. The spinor associated with a fermion $f$ with color $j$ is denoted by $f_j$, with the convention that
$j= 1\div 3$ for quarks and $j= 1$ for leptons.
The sum runs over all fermions and $f^\prime$ is the isospin partner of $f$ in the limit of diagonal CKM quark-mixing matrix. 
The vector and axial couplings are
\bqa
\label{eq:Zcopulings}
v_f = I_{3f}-2 s^2_\theta Q_f,\hskip 10pt a_f= -I_{3f},
\eqa
where $I_{3f}$ is the third isospin component, $Q_f$ the electric charge and $s_\theta$ ($c_\theta$) is the sine (cosine) of the weak mixing angle.
The Feynman gauge is used, hence the gauge boson propagators read 
\bqa
\label{eq:props}
P_A^{\alpha \beta}(p^2) =\aln -i g^{\alpha \beta}\frac{1}{p^2},\hskip 10pt 
P_W^{\alpha \beta}(p^2) = -i g^{\alpha \beta}\frac{1}{p^2-M^2},\nl 
P_Z^{\alpha \beta}(p^2) =\aln -i g^{\alpha \beta}\frac{1}{p^2-M^2/c^2_\theta}.
\eqa

Our effective nonrenormalizable interaction Lagrangian reads 
\bqa
\label{eq:Lnonren}
{\cal L}^{\mbox{\tiny  EFF}}_{\mbox{\tiny INT}}=
{\cal L}^{\mbox{\tiny  QED }}_{\mbox{\tiny INT}}+{\cal L}^{\mbox{\tiny FERMI  }},
\eqa
with
\bqa
\label{eq:Lnonrenint}
 {\cal L}^{\mbox{\tiny  FERMI}}=
 -\frac{g^2}{8 M^2} {J^\dag}_{\!\!\!c\alpha} J_{c}^\alpha -\frac{g^2}{8 M^2} J_{n\alpha} J_{n}^\alpha,
 \eqa
where the charged and neutral currents are given by
 \bqa
 J_{c}^\alpha=\aln \sum_f \frac{2I_{3f}+1}{2}
 \bar f_j \gamma^\alpha (1-\gamma_5) f^\prime_j,\nl
 J_{n}^\alpha=\aln \sum_f \bar f_j \gamma^\alpha (v_f+a_f \gamma_5) f_j.
 \eqa
 In \eqref{eq:Lnonrenint} the four-fermion coupling between currents is written in a form which reproduces the tree-level low-energy result obtained with
${\cal L}^{\mbox{\tiny  SM}}_{\mbox{\tiny INT}}$ when using $P_{W,Z}^{\alpha \beta}(0)$.
Massive gauge boson propagators are absent in the effective theory, while the photon propagator is as in \eqref{eq:props}.
 
The main objects entering our calculation are the truncated one-fermion-loop contributions depicted in figure~\ref{fig:diagrams}. Fermion masses are neglected, when possible, except in the case of the top quark, for which the leading $m^2_t$ contribution is also included. The $p^\alpha p^\beta$ parts are omitted, because they do not contribute on-shell.
\begin{figure}[t]
\begin{center}
\vspace{-3.8cm}
\includegraphics[width=1.2\textwidth, angle= {0}]{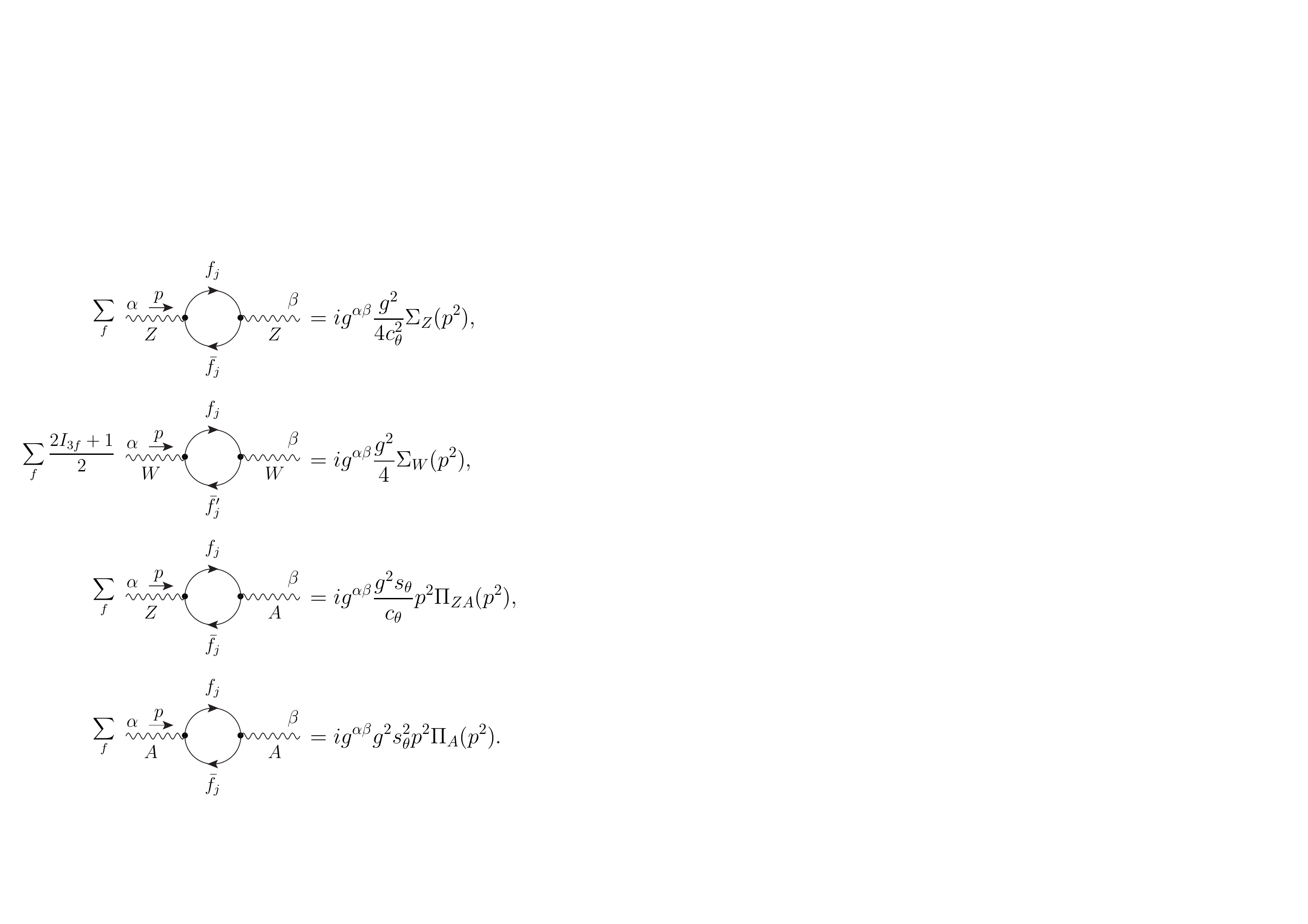}
\vspace{-2.1cm}
\caption{The parts of the truncated one-fermion-loop diagrams proportional to the metric tensor. The dots in the vertices denote that the external propagators are not included.
}
\label{fig:diagrams}
\end{center}
\end{figure}
An FDR computation of the form factors requires the integrals in \eqref{eq:loopint}. The result is
\bqa
\Sigma_Z(p^2)=\aln-\frac{p^2}{\pi^2}
\left(1-2s^2_\theta+\frac{8}{3}s^4_\theta\right)
\left(\Lr -L +\frac{5}{3}\right) \nl \aln
+ \frac{3 m^2_t}{8 \pi^2}
\left(\Lr -\ln m^2_t\right), \nl
\Sigma_W(p^2)=\aln-\frac{p^2}{\pi^2}
\left(\Lr -L +\frac{5}{3}\right) \nl \aln
+ \frac{3 m^2_t}{8 \pi^2} 
\left(\Lr -\ln m^2_t +\frac{1}{2}\right), \nl
\Pi_A(p^2)=\aln-\frac{2}{3\pi^2}
\left(\Lr -L +\frac{5}{3}\right), \nl
\Pi_A(0)=\aln-\frac{2}{3\pi^2}
\left(\Lr -K_2\right), \nl
\Pi_{Z A}(p^2)=\aln-\frac{1}{\pi^2}
\left(\frac{1}{4}-\frac{2}{3} s^2_\theta \right)
\left(\Lr -L +\frac{5}{3}\right) \nl
=\aln \Pi_{A Z}(p^2), \nl
\Pi_{Z A}(0)=\aln-s^2_\theta \Pi_A(0) -\frac{1}{4\pi^2}
\left(\Lr -K_1\right) \nl
=\aln \Pi_{A Z}(0), 
\eqa
with $L:= \ln(-p^2-i \epsilon)$.
Furthermore
\bqa
\label{eq:k1k2}
K_1:=\aln \frac{1}{2}
+\frac{\ln m^2_e +\ln m^2_\mu +\ln m^2_\tau}{12} \nl \aln
+\frac{\ln m^2_u +\ln m^2_c +\ln m^2_t}{6} \nl \aln
+\frac{\ln m^2_d +\ln m^2_s +\ln m^2_b}{12}, \nl
K_2:=\aln \frac{1}{2}
+\frac{\ln m^2_e +\ln m^2_\mu +\ln m^2_\tau}{8} \nl \aln
+\frac{\ln m^2_u +\ln m^2_c +\ln m^2_t}{6} \nl \aln
+\frac{\ln m^2_d +\ln m^2_s +\ln m^2_b}{24},
\eqa
where the light quark masses have to be considered as effective parameters adjusted to fit the dispersion integral defining the hadronic contribution to the vacuum polarization.
\subsection{Fixing the free parameters of the models}
\label{sec:renormalization}
Both Lagrangians in \eqref{eq:Lren} and \eqref{eq:Lnonren} depend on the set of bare parameters $\{g^2,M^2,s^2_\theta\}$, which need to be fixed by experiment. As input data we choose the fine structure constant $\aem$, measured in the Thomson limit of the Compton scattering, the muon decay constant $G_F$, extracted from the muon lifetime, and the ratio $R_{\scriptscriptstyle e \nu}$ between the total $e^- \nu_\mu$ and $e^- \bar \nu_\mu$ elastic cross sections at zero momentum transfer.
In the following, we determine and solve the fitting equations \cite{Veltman:1977kh,Bardin:1999ak}
linking $\{\aem,G_F,R_{\scriptscriptstyle e \nu}\}$ to
$\{g^2,M^2,s^2_\theta\}$
in both renormalizable and nonrenormalizable models.

In the renormalizable theory one constructs the fermion-loop
dressed propagators,
\bqa
\label{eq:dressprop}
D_V^{\alpha \beta}(p^2)= -i g^{\alpha \beta} \Delta_V(p^2),\hskip 8pt
V = W,Z,A,ZA,AZ,
\eqa
by Dyson resumming to all orders the self-energy contributions of figure~\ref{fig:diagrams}. The result reads
\bqa
\label{eq:dyson}
\Delta_W(p^2) =\aln \frac{1}{g^2} \frac{1}{P_W(p^2)}, \nl
\Delta_Z(p^2) =\aln \frac{1}{g^2} \frac{1}{P_Z(p^2)} \frac{1}{{\cal Z}(p^2)},\nl
p^2 \Delta_A(p^2) =\aln \frac{1}{P_A(p^2){\cal Z}(p^2)}, \nl
\Delta_{ZA}(p^2) =\aln g^2\frac{s_\theta}{c_\theta}\frac{\Pi_{ZA}(p^2)}{P_A(p^2)}
\Delta_Z(p^2) = \Delta_{AZ}(p^2),
\eqa
with
\bqa
\label{eq:dyson1}
P_W(p^2) =\aln \frac{p^2}{g^2}-\frac{M^2}{g^2}-\frac{\Sigma_W(p^2)}{4}, \nl   
P_Z(p^2)  =\aln \frac{p^2}{g^2}-\frac{M^2}{g^2c^2_\theta}-\frac{\Sigma_Z(p^2)}{4 c^2_\theta}, \nl
P_A(p^2) =\aln  1-g^2 s^2_\theta \Pi_A(p^2), \nl
{\cal Z}(p^2)=\aln
1-p^2 g^2 \frac{s^2_\theta}{c^2_\theta} \frac{\Pi^2_{ZA}(p^2)}{P_A(p^2) P_Z(p^2)}.
\eqa
Using the propagators in \eqref{eq:dressprop} to compute the 
Thomson scattering, the muon lifetime and $R_{\scriptscriptstyle e \nu}$, gives the fitting equations
\begin{subequations}\label{eq:fitting}
\bqa
\label{eq:fitting:1}
4 \pi \aem=\aln \frac{g^2 s^2_\theta}{1-g^2 s^2_\theta \Pi_A(0)}, \\
\label{eq:fitting:2}
\frac{G_F}{\sqrt{2}}=\aln \frac{g^2}{8 \left[M^2+\frac{g^2}{4} \Sigma_W(0) \right]}, \\
\label{eq:fitting:3}
R_{\scriptscriptstyle e \nu} = \aln \frac{16 S^4-12 S^2+3}{16 S^4-4 S^2+1},
\eqa
\end{subequations}
where
\bqa
S^2 := s^2_\theta
\left\{
1-\frac{g^2 \Pi_{ZA}(0)}{1-g^2 s^2_\theta \Pi_A(0)}
\right\}. \nonumber
\eqa

In the case of the nonrenormalizable model, it is easy to prove that
\begin{theorem}
  \label{th:1}
  Computing $\{\aem,G_F,R_{\scriptscriptstyle e \nu}\}$  in terms of

\noindent  $\{g^2,M^2,s^2_\theta\}$ produces the same fitting equations \eqref{eq:fitting} of the renormalizable theory.
\end{theorem}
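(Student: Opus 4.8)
The plan is to evaluate $\{\aem,G_F,R_{\scriptscriptstyle e \nu}\}$ directly in the effective theory \eqref{eq:Lnonren} and to check, term by term, that the relations to $\{g^2,M^2,s^2_\theta\}$ come out exactly as in \eqref{eq:fitting}. The whole argument rests on one observation: all three input observables are extracted at vanishing momentum transfer, so that every would-be massive-gauge-boson line of the renormalizable calculation is probed only at $p^2=0$. I would split the computation into a purely photonic part and a part which in the full theory is mediated by $W$ or $Z$ exchange. The photonic part is immediate: ${\cal L}^{\mbox{\tiny  QED}}_{\mbox{\tiny INT}}$ and $P_A$ are shared by \eqref{eq:Lren} and \eqref{eq:Lnonren}, and the photon two-point function receives in both models only the one-fermion-loop insertion $\Pi_A$ --- a contact-vertex contribution to it would already involve two fermion loops and is excluded in the resummed one-fermion-loop approximation --- so $\Pi_A(0)$ and its Dyson resummation are literally unchanged.

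For the massive-boson part the key remark is that at $p^2=0$ the replacement of a propagator $1/(p^2-M^2_{W,Z})$ by the contact coupling of \eqref{eq:Lnonrenint} is an identity, not an expansion: $1/(p^2-M^2_{W,Z})\big|_{p^2=0}=-1/M^2_{W,Z}$ is precisely the factor built into ${\cal L}^{\mbox{\tiny  FERMI}}$, so the two models agree at tree level by construction. Moreover, the currents $J_c,J_n$ of \eqref{eq:Lnonrenint} are exactly the $W$- and $Z$-currents of \eqref{eq:Lrenint}, so a one-fermion-loop bubble inserted between two contact vertices carries the very same Dirac and isospin structure and the same massless internal propagators as the self-energies $\Sigma_W,\Sigma_Z$. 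Hence the chain of such insertions and the Dyson resummation \eqref{eq:dyson}--\eqref{eq:dyson1} are geometric series with the same first term and the same common ratio at $p^2=0$, and therefore sum to the same value. Inserting this into the muon-lifetime relation returns \eqref{eq:fitting:2}, and into the $\nu_\mu e$ and $\bar\nu_\mu e$ cross sections returns the $Z$-exchange content of \eqref{eq:fitting:3}.

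It remains to account for the $\gamma$--$Z$ mixing. In the renormalizable result it enters the photon propagator only through ${\cal Z}(p^2)$, and $1-{\cal Z}(p^2)={\cal O}(p^2)$ by \eqref{eq:dyson1}, so ${\cal Z}(0)=1$ and $4\pi\aem=g^2 s^2_\theta/P_A(0)$, which the effective model reproduces by the photonic argument above; for the muon lifetime the $W$ does not mix, so nothing is added. For $R_{\scriptscriptstyle e \nu}$ the relevant object is the effective mixing angle $S^2$ introduced below \eqref{eq:fitting:3}; I would check that the effective-model counterpart of $\Delta_{ZA}$ --- a neutral-current contact vertex joined through one fermion loop to an external photon line --- equals at $p^2=0$ exactly $\Delta_{ZA}(0)$ as given by \eqref{eq:dyson}, so that $S^2$, and hence $R_{\scriptscriptstyle e \nu}$, come out as in \eqref{eq:fitting:3}. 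Collecting the three results reproduces \eqref{eq:fitting} in full.

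The step that I expect to demand actual care is the combinatorial bookkeeping underlying the second and third paragraphs: matching, with the correct symmetry factors and relative signs, a fermion loop sitting between two $J^\dagger_c J_c$ (resp.\ two $J_n J_n$) contact vertices onto the self-energy bubbles that define $\Sigma_W$, $\Sigma_Z$ and $\Pi_{ZA}$ in \eqref{eq:dyson1}. The bilinear-in-currents form of the four-fermion term can generate factors of two that must land exactly on the coefficients $\tfrac{1}{4}$ and $\tfrac{1}{4c^2_\theta}$ in \eqref{eq:dyson1} and on the normalisation of $S^2$; getting these numbers right is precisely what makes the effective and renormalizable geometric series coincide. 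A minor additional check is that the $p^2\to0$ limit is smooth --- i.e.\ that the ${\cal O}(p^2)$ pieces such as $1-{\cal Z}$ genuinely drop, and that the FDR loop integrals \eqref{eq:loopint}, taken with one and the same $\mur$ in both theories, yield identical finite parts, as required by the consistency remark at the end of Section~\ref{sec:fdr}.
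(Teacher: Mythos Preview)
Your proposal is correct and follows the same essential route as the paper: both arguments reduce to the observation that the three input observables are extracted at $p^2=0$, where the Dyson-resummed chains of fermion-loop insertions between contact vertices reproduce exactly the dressed propagators $\Delta_{W,Z,A,ZA}(0)$ of \eqref{eq:dyson}. The paper states this in two sentences and then illustrates it with the charged-current example \eqref{eq:A0}; your version spells out the photonic, massive and mixing pieces separately and flags the symmetry-factor bookkeeping, but this is an elaboration of the same proof rather than a different one.
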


\begin{proof}
When resumming to all orders the interactions mediated by the fermion loops, one arrives at results which have the same form of transitions induced by the dressed propagators of \eqref{eq:dyson} computed at $p^2= 0$. Since the observables used as input data only involve zero momentum transfer, the equations \eqref{eq:fitting} are also valid in the nonrenormalizable theory.
\end{proof}
As an example,  the diagram  relevant in the case of charged currents is given in figure~\ref{fig:wloop}. That modifies the muon decay amplitude as depicted in figure~\ref{fig:mudecay}.
One computes
\bqa
\label{eq:A0}
A^{\mbox{\tiny EFF}}_W(0)=  -\frac{i\Gamma}{8} \frac{g^2}{M^2}
\frac{1}{1+\frac{g^2}{4 M^2} \Sigma_W(0)}
=   \frac{i \Gamma}{8} g^2 \Delta_W(0), \nl
\eqa
where $\Gamma$ is the result of the contraction of the two charged currents
$
\Gamma := \gamma_\alpha (1-\gamma_5) \otimes \gamma^\alpha(1-\gamma_5)
$,
in which the symbol $\otimes$ understands multiplication by the relevant external spinors.
Using  \eqref{eq:A0} to define the combination $g^2/M^2$ leads to \eqref{eq:fitting:2}.

\begin{figure}[t]
  \vskip -68pt
  \hskip -110pt
\includegraphics[width=0.9\textwidth, angle= {-90}]{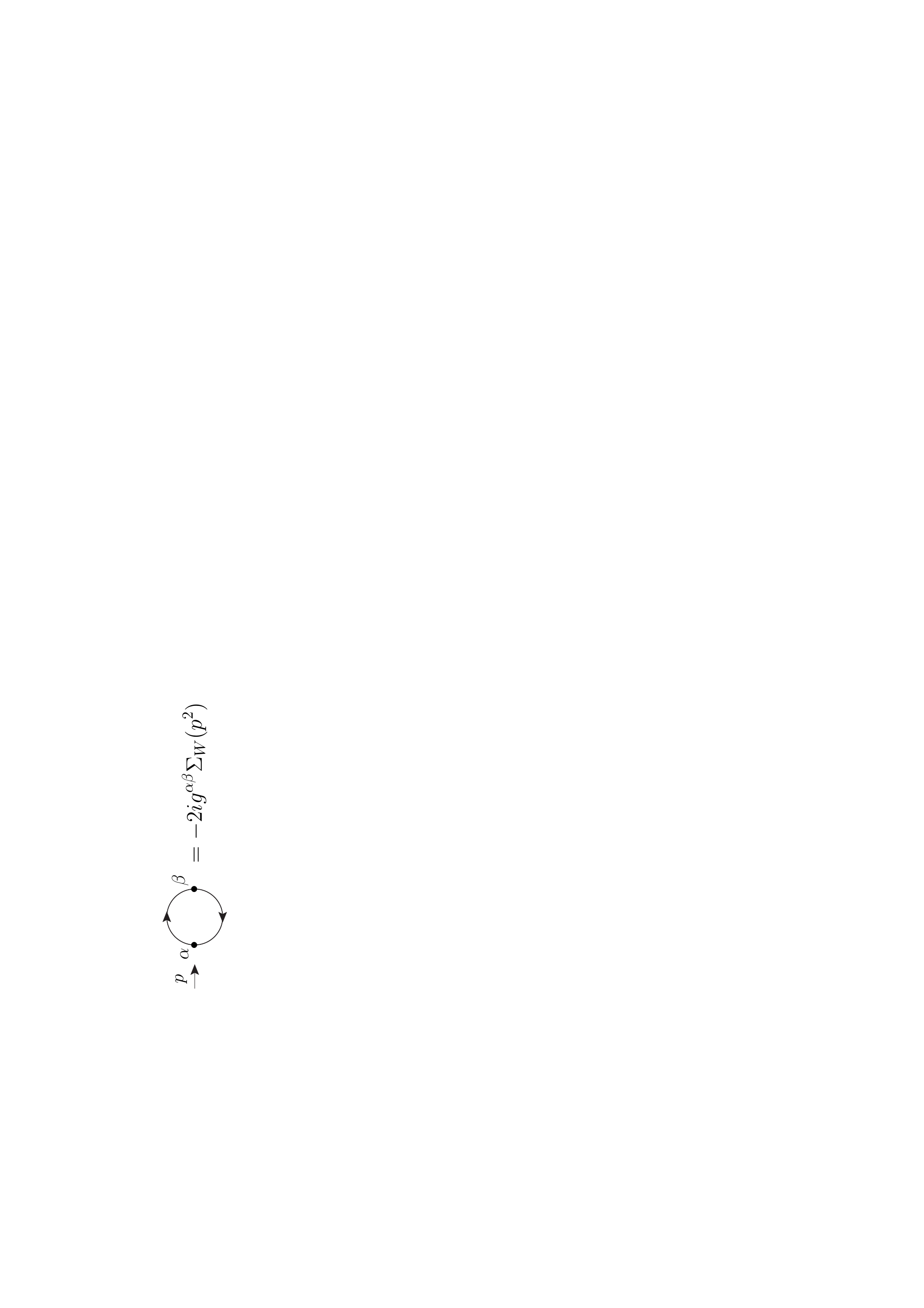}
  \vskip -330pt
\caption{The diagram mediating fermion-loop induced interactions between charged currents in the nonrenormalizable theory.}
\label{fig:wloop}
\end{figure}

\begin{figure}[t] 
  \vskip -68pt
  \hskip -100pt
\includegraphics[width=0.9\textwidth, angle= {-90}]{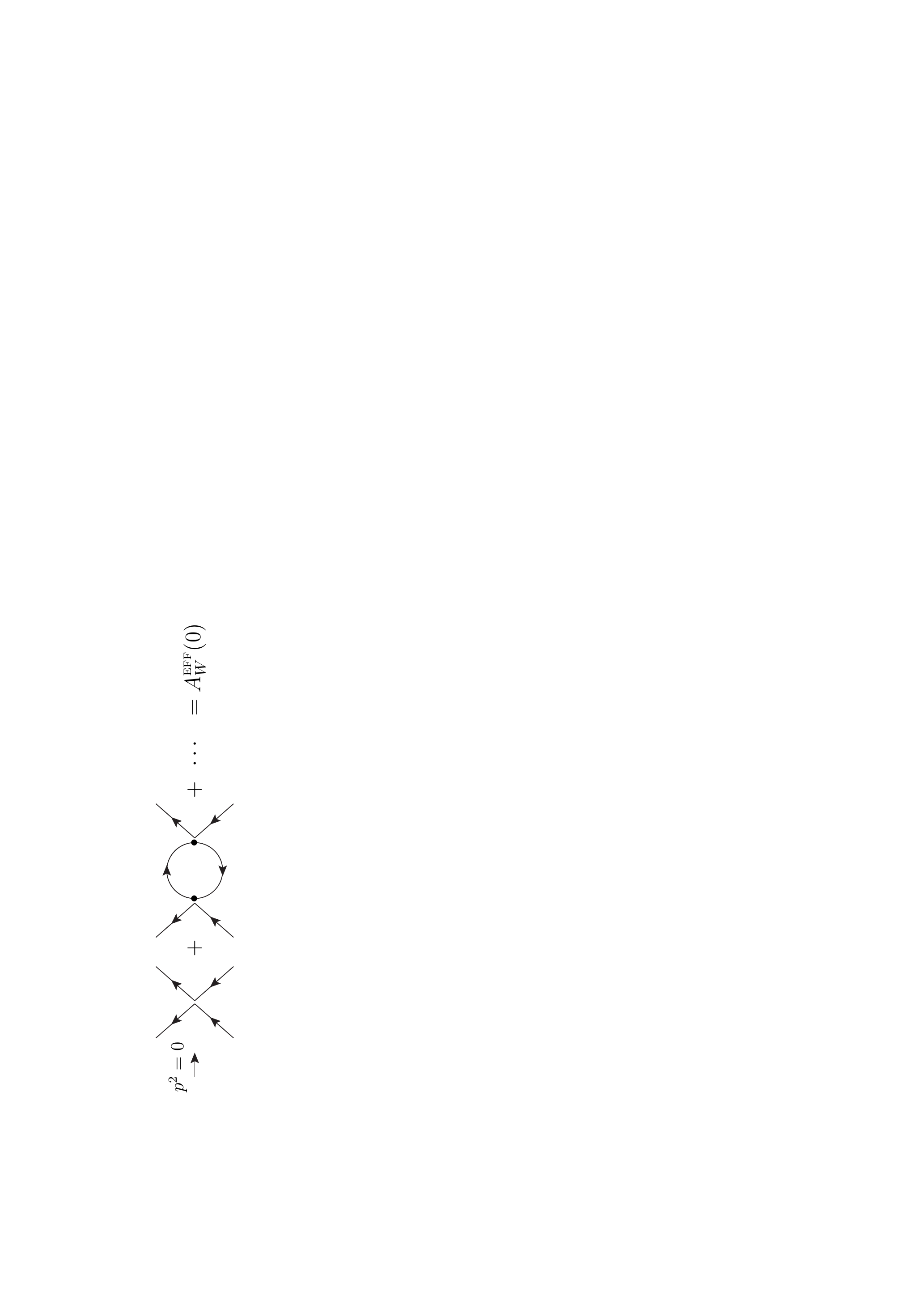}
  \vskip -325pt
\caption{The amplitude describing the muon decay in the nonrenormalizable theory.  The interaction of figure~\ref{fig:wloop} is evaluated at $p^2= 0$ and resummed to all orders.}
\label{fig:mudecay}
\end{figure}

\vspace{12pt}
Finally, to solve the fitting equations we first introduce the tree-level solution to \eqref{eq:fitting:3}, namely $\hat s_\theta$ such that
\bqa
R_{\scriptscriptstyle e \nu} = \frac{16 {\hat s_\theta}^4-12 {\hat s_\theta}^2  +3}{16 {\hat s_\theta}^4-4 {\hat s_\theta}^2+1}.
\eqa
Radiative corrections do not change $R_{\scriptscriptstyle e \nu}$ when
$S^2= {\hat s_\theta}^2$, that gives 
\begin{subequations}\label{eq:sol}
  \bqa
 \label{eq:sol:1}
s^2_\theta =\aln {\hat s_\theta}^2 \frac{F_1}{F_2}, \\
 \label{eq:sol:2}
g^2       =\aln \frac{4 \pi \aem}{{\hat s_\theta}^2F_1}, \\
 \label{eq:sol:3}
M^2       =\aln \frac{{\hat M}^2}{F_1} \left(1- \sqrt{2}G_F \Sigma_W (0)\right),
\eqa
\end{subequations}
with
\bqa
\label{eq:def}
{\hat M}^2:=\aln \frac{\pi \aem}{\sqrt{2} G_F {\hat s_\theta}^2},\hskip 10pt
F_1 := 1 -\frac{\aem}{\pi {\hat s_\theta}^2} \left(\Lr-K_1 \right),\nl 
F_2 :=\aln 1 -\frac{8\aem}{3 \pi} \left(\Lr-K_2 \right).
\eqa

\subsection{Matching the exact theory onto the nonrenormalizable model}
\label{sec:matching}
The high-energy fermion-loop corrections computed with ${\cal L}^{\mbox{\tiny  SM}}_{\mbox{\tiny INT}}$ are matched onto ${\cal L}^{\mbox{\tiny  EFF}}_{\mbox{\tiny INT}}$ by comparing amplitudes induced by charged currents of virtuality $p^2$.
In the renormalizable theory one has
\bqa
\label{eq:awsmp}
A_W^{\mbox{\tiny SM}}(p^2) =\aln
\frac{i \Gamma}{8}  g^2 \Delta_W(p^2) \nl
=\aln
\frac{i \Gamma}{8}
\left\{\frac{p^2}{g^2}-\frac{M^2}{g^2}-\frac{\Sigma_W(p^2)}{4} 
\right\}^{-1}, 
\eqa
while resumming the interaction as in figure~\ref{fig:mudecay}, but with $p^2 \ne 0$, gives
\bqa
\label{eq:aweffp}
A_W^{\mbox{\tiny EFF}}(p^2,\Lr)=
\frac{i\Gamma}{8}
\left\{-\frac{M^2}{g^2}-\frac{\Sigma_W(p^2)}{4}
\right\}^{-1}.
\eqa
Equations \eqref{eq:awsmp} and \eqref{eq:aweffp} differ by the term $p^2/g^2$, so that inserting the solution \eqref{eq:sol} produces a result independent of $\Lr$ for $A_W^{\mbox{\tiny SM}}$, whilst $A_W^{\mbox{\tiny EFF}}$ still depends on $\Lr$, 
\bqa
\label{eq:asm}
\frac{A_W^{\mbox{\tiny SM}}(p^2)}{K(\aem)} = \aln
\Bigg\{1-\frac{p^2}{{\hat M}^2} \nl
\aln -\frac{\aem}{\pi {\hat s_\theta}^2{\hat M}^2 } p^2
\left(K_1-L + {5}/{3}\right)
\Bigg\}^{-1}, \\
\label{eq:aeff}
\frac{A_W^{\mbox{\tiny EFF}}(p^2,\Lr)}{K(\aem)}= \aln
\Bigg\{1 -\frac{\aem}{\pi {\hat s_\theta}^2{\hat M}^2 } p^2
\left(\Lr-L + {5}/{3}\right)
\Bigg\}^{-1} \\
K(\aem) = \aln 
-\frac{i\Gamma}{2} \frac{\pi \aem}{{\hat s_\theta}^2{\hat M}^2}.
\nonumber
\eqa
At fixed $\ell$, the amplitudes in \eqref{eq:asm} and \eqref{eq:aeff} are the right and left sides of the matching equation \eqref{eq:fix1} needed to determine $\mur^\prime$. For instance, the conditions ensuring the validity of \eqref{eq:2loopmatching} can be verified by expanding up to the second order in
$\lambda = {p^2}/{{\hat M}^2}$,
\bqa
\label{eq:12exp}
\aln\frac{A_W^{\mbox{\tiny SM}}(p^2)}{K(\alpha)}= 
1
+\lambda \left(1 
+\frac{\alpha}{\pi {\hat s_\theta}^2} \left(K_1-L + {5}/{3}\right)\right) \nl
\aln \hskip 20pt +\lambda^2 \left(1 
+\frac{\alpha}{\pi {\hat s_\theta}^2} \left(K_1-L + {5}/{3}\right)\right)^2 
+ {\cal O}(\lambda^3), \nl
\aln \frac{A_W^{\mbox{\tiny EFF}}(p^2,\Lr)}{K(\alpha)}=  
1
+\frac{\alpha \lambda}{\pi {\hat s_\theta}^2}
\left(\Lr-L + {5}/{3}\right) \nl
\aln \hskip 20pt +\frac{\alpha^2 \lambda^2}{\pi^2 {\hat s_\theta}^4}
\left(\Lr-L + {5}/{3}\right)^2 +{\cal O}(\lambda^3), 
\eqa
where $\alpha= \aem$. From \eqref{eq:12exp} one reads off the nonzero coefficients \footnote{Since $N=1$, $\mj=(j)$.} 
\bqa
\hskip -5pt
\begin{tabular}{lll} 
  $A^{(1)}_{01}=1$, & \hskip -9pt $A^{(2)}_{02}=1$, & \hskip -9pt $A^{(1)}_{11}=\frac{5/3-L+K_1}{\pi {\hat s_\theta}^2}$,
  \hskip -9pt \\
  $A^{(2)}_{12}= 2 A^{(1)}_{11}$, & \hskip -9pt
  $A^{(2)}_{22}= \left(A^{(1)}_{11}\right)^2\!$, & \hskip -9pt
  $B^{(1)}_{110}=\frac{5/3-L}{\pi {\hat s_\theta}^2}$,
  \hskip -9pt \\
  $B^{(2)}_{220}= \left(B^{(1)}_{110}\right)^2\!$, & \hskip -9pt
  $B^{(1)}_{111}=\frac{1}{\pi {\hat s_\theta}^2}$, & \hskip -9pt
  $B^{(2)}_{221}=\frac{2}{\pi {\hat s_\theta}^2} B^{(1)}_{110} $,
  \hskip -9pt \\
  $B^{(2)}_{222}= \left(B^{(1)}_{111}\right)^2\!$, & \hskip -9pt & \hskip -9pt
\end{tabular} 
\eqa
and the solution $X^\prime_{-1}= \pi {\hat s_\theta}^2$, 
$X^\prime_{0}= K_1$, $X^\prime_{1}= 0$, namely
\bqa
\label{eq:musol}
\Lr^\prime= \frac{\pi{\hat s_\theta}^2}{\aem}+K_1,
\eqa
fulfills, for any value of $j$, all conditions stated by \eqref{eq:sis2}
and \eqref{eq:inds2}. As a matter of fact, $\Lr= \Lr^\prime$ solves \eqref{eq:fix1} to all orders. In fact, this is the value for which the resummed amplitudes of \eqref{eq:asm} and \eqref{eq:aeff} coincide. Hence, choosing the renormalization scale as in \eqref{eq:musol} reproduces the effect of interchanging a one-fermion-loop dressed $W$ boson of arbitrary virtuality $p^2$.

Now we consider a further amplitude $A_Z^{\mbox{\tiny EFF}}$ obtained by contracting two neutral currents. It obeys
\eqref{eq:matchtheo0} by construction and
\begin{theorem}
  \label{th:2} When computed at $\Lr = \Lr^\prime$, any effective amplitude involving two massless neutral currents reproduces, at any value of $p^2$, the exact all-order result predicted by ${\cal L}^{\mbox{\tiny  SM}}_{\mbox{\tiny INT}}$.
\end{theorem}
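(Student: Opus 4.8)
The plan is to mirror, step by step, the argument just given for the charged current $A_W$. First I would write the exact neutral-current amplitude $A_Z^{\mbox{\tiny SM}}(p^2)$ obtained by contracting two currents $\bar f_j\gamma^\alpha(v_f+a_f\gamma_5)f_j$ with the Dyson-resummed propagators of \eqref{eq:dyson}. Since the external vertices can attach only to $Z$ lines, every self-energy chain between them --- arbitrary sequences of $\Sigma_Z$ insertions and $\Pi_{ZA}$--photon--$\Pi_{AZ}$ mixing blocks --- is precisely what is resummed into $\Delta_Z(p^2)$, so $A_Z^{\mbox{\tiny SM}}(p^2)={\cal K}_Z\,[P_Z(p^2)\,{\cal Z}(p^2)]^{-1}$ with ${\cal K}_Z$ a kinematics-free Dirac factor and $P_Z,{\cal Z}$ as in \eqref{eq:dyson1}. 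Next I would construct $A_Z^{\mbox{\tiny EFF}}(p^2,\Lr)$ in the Fermi model: the contact term $-\frac{g^2}{8M^2}J_{n\alpha}J_n^\alpha$ is dressed by the same fermion loops, but now the massive tree propagators are absent while the photon propagator is kept exact, so the $\Pi_{ZA}$--photon--$\Pi_{AZ}$ mixing (with the photon itself dressed by $\Pi_A$) is still present. Summing the geometric series in which each link is either a $\Sigma_Z$ loop or such a mixing block reproduces the same structure with $P_Z$ replaced by $P_Z^{\mbox{\tiny EFF}}(p^2):=P_Z(p^2)-p^2/g^2$ and ${\cal Z}$ by its $P_Z^{\mbox{\tiny EFF}}$-version, the photon block $\Pi_{ZA}^2/P_A$ being common to both models.

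The key algebraic observation is that, ${\cal Z}$ entering multiplicatively, the two inverse amplitudes differ by a single term (arguments $p^2$ suppressed),
\bqa
P_Z\,{\cal Z}-P_Z^{\mbox{\tiny EFF}}\,{\cal Z}^{\mbox{\tiny EFF}}&=&P_Z-P_Z^{\mbox{\tiny EFF}}= \frac{p^2}{g^2}, \nonumber
\eqa
exactly as $A_W^{\mbox{\tiny SM}}$ and $A_W^{\mbox{\tiny EFF}}$ differ by $p^2/g^2$ in \eqref{eq:awsmp}--\eqref{eq:aweffp}. I would then insert the parameter solution \eqref{eq:sol}. By the general $\mur$-independence of renormalizable theories in FDR, \eqref{eq:indmur}, $A_Z^{\mbox{\tiny SM}}(p^2)$ becomes $\Lr$-independent; one can also see this directly, since the $F_1$ in $g^2$ cancels the $F_1$ in $M^2$ inside $M^2/g^2$, and the residual $\Lr$'s carried by $p^2/g^2$ (through $F_1$), by $\Sigma_Z$ and by the photon self-energies $\Pi_A,\Pi_{ZA}$ cancel among themselves. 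Finally, evaluating the residual difference at $\Lr=\Lr^\prime=\pi\hat s_\theta^2/\aem+K_1$, the scale already fixed by the $W$ sector in \eqref{eq:musol}: there $F_1=0$, hence $p^2/g^2=p^2\hat s_\theta^2 F_1/(4\pi\aem)=0$, while $P_Z,{\cal Z},P_Z^{\mbox{\tiny EFF}},{\cal Z}^{\mbox{\tiny EFF}}$ stay finite (all physical combinations such as $M^2/g^2$ and $g^2 s_\theta^2$ are $F_1$-free, and $s_\theta^2\to 0$, $c_\theta^2\to 1$). Hence $A_Z^{\mbox{\tiny EFF}}(p^2,\Lr^\prime)=A_Z^{\mbox{\tiny SM}}(p^2)$ for every $p^2$; together with the equality at $p^2=0$ guaranteed by Theorem~\ref{th:1}, this proves the statement, and it extends to every observable built from two neutral currents because the same scalar $\Delta_Z(p^2)$ multiplies all of them.

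I expect the main obstacle to be not the final algebra but the rigorous derivation of the structural claim of the first paragraph: one must show that re-summing the photon-mediated $Z$--$\gamma$ mixing inside the Fermi model yields exactly the ${\cal Z}^{\mbox{\tiny EFF}}$-dressing with the SM photon block $\Pi_{ZA}^2/P_A$ untouched --- i.e. that no extra $p^2$-dependence enters beyond the single $p^2/g^2$ term --- and that the $\Lr$-dependent prefactor $1/c_\theta^2$ does not obstruct the cancellation, the latter being automatic once $A_Z^{\mbox{\tiny SM}}$ is known to be $\Lr$-independent as a whole. A systematic way to carry this out is to expand both amplitudes as geometric series in the loop functions, read off the coefficients $A_{ij}^\mj$ and $B_{ijk}^\mj$ in the notation of section~\ref{sec:matchingamp}, and check that they obey the $\ell$-loop matching conditions with the very same $\Lr^\prime$.
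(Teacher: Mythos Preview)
Your algebra is right --- the identity $P_Z\,{\cal Z}-P_Z^{\mbox{\tiny EFF}}\,{\cal Z}^{\mbox{\tiny EFF}}=p^2/g^2$ holds, and at $\Lr=\Lr^\prime$ one has $F_1=0$, hence $p^2/g^2\to 0$. The gap is earlier, in what you take $A_Z^{\mbox{\tiny SM}}$ to be. By writing it as ${\cal K}_Z\,[P_Z{\cal Z}]^{-1}$ with external vertices ``attaching only to $Z$ lines'', you keep only the $\Delta_Z$ sub-amplitude. But the external massless fermions also couple to the photon through ${\cal L}^{\mbox{\tiny QED}}_{\mbox{\tiny INT}}$, which is present in \emph{both} models, so the physical neutral-sector amplitude has four pieces: a $\gamma\gamma$ exchange through $\Delta_A$, a $ZZ$ exchange through $\Delta_Z$, and two $Z\gamma$ mixing pieces through $\Delta_{ZA}$. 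Your invocation of \eqref{eq:indmur} therefore fails: the $\Delta_Z$ piece alone is \emph{not} $\Lr$-independent. Concretely, your ``kinematics-free Dirac factor'' ${\cal K}_Z$ contains $v_f=I_{3f}-2s_\theta^2 Q_f$, and $s_\theta^2$ carries $\Lr$ through \eqref{eq:sol:1}; that dependence is cancelled only against the photon and mixing sub-amplitudes, not internally. So neither the renormalizability argument nor the claimed ``direct'' cancellation goes through for the object you isolate.

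The paper's proof fixes exactly this: it takes $A_Z^{\mbox{\tiny SM}}(p^2)=\sum_{k=1}^4 A_k^{\mbox{\tiny SM}}$, the full fermion--fermion neutral amplitude, which \emph{is} $\Lr$-independent, and is therefore free to evaluate every $A_k^{\mbox{\tiny SM}}$ at $\Lr=\Lr^\prime$. Then the single observation is that $P_Z$ enters \emph{all} of $\Delta_A$, $\Delta_Z$, $\Delta_{ZA}$ through the common factor ${\cal Z}$, so the only SM/EFF difference in each $A_k$ is the $p^2/g^2$ inside $P_Z$, which vanishes at $\Lr^\prime$; hence $A_k^{\mbox{\tiny EFF}}(p^2,\Lr^\prime)=A_k^{\mbox{\tiny SM}}(p^2,\Lr^\prime)$ for every $k$ and the sums agree. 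Your mechanism is the right one, but it must be run on all four sub-amplitudes simultaneously so that the $\Lr$-independence of the sum can be legitimately used.
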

So that, $A_Z^{\mbox{\tiny EFF}}$ fulfills \eqref{eq:matchtheo} at any $\ell$.
\begin{proof}
Consider the full amplitude
\bqa
\label{eq:smnfull}
A_Z^{\mbox{\tiny SM}}(p^2)= \sum_{k=1}^4  A_k^{\mbox{\tiny SM}}(p^2,\Lr)
\eqa
describing the interaction between two massless fermions $f_1$ and $f_2$ in the renormalizable theory. A computation of the sub-amplitudes in
figure~\ref{fig:subamplitudes} gives
\bqa
\label{eq:subamplitudes}
A_1^{\mbox{\tiny SM}}(p^2,\Lr)=\aln i g^2 s^2_\theta  Q_{f_1} Q_{f_2} \Delta_A(p^2)
\gamma_\alpha \otimes
\gamma^\alpha, \nl
A_2^{\mbox{\tiny SM}}(p^2,\Lr)=\aln i g^2 \frac{1}{4 c^2_\theta}  \Delta_Z(p^2)
\gamma_\alpha(v_{f_1}+a_{f_1} \gamma_5) \nl \aln
\otimes \,\gamma^\alpha(v_{f_2}+a_{f_2} \gamma_5), \nl
A_3^{\mbox{\tiny SM}}(p^2,\Lr)=\aln  i g^2 \frac{s_\theta Q_{f_2}}{2 c_\theta} \Delta_{ZA}(p^2)
\gamma_\alpha(v_{f_1}+a_{f_1} \gamma_5) \otimes
\gamma^\alpha, \nl
A_4^{\mbox{\tiny SM}}(p^2,\Lr)=\aln i g^2 \frac{s_\theta Q_{f_1}}{2 c_\theta}  \Delta_{ZA}(p^2)
\gamma_\alpha \otimes
\gamma^\alpha(v_{f_2}+a_{f_2} \gamma_5). \nl
\eqa
\begin{figure}[t]
\vskip -55pt  
\includegraphics[width=1.2\textwidth]{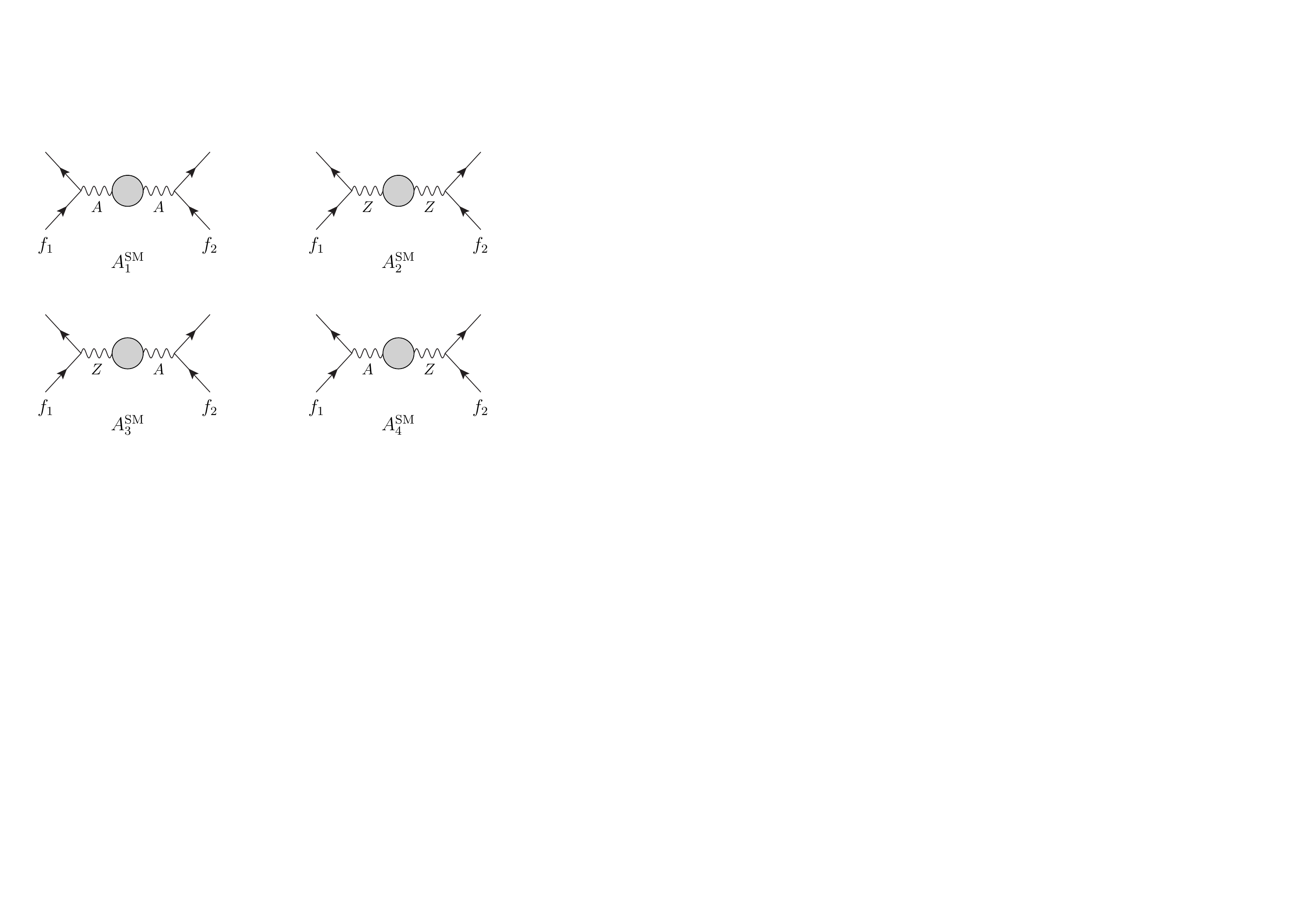}
\vskip -215pt  
\caption{The four sub-amplitudes in \eqref{eq:subamplitudes} induced by the fermion-loop dressed propagators of \eqref{eq:dressprop}. The external fermions are massless, so that diagrams involving the exchange of neutral scalars are absent.}
\label{fig:subamplitudes}
\end{figure}
Since ${\cal L}^{\mbox{\tiny  SM}}_{\mbox{\tiny INT}}$ is renormalizable, $A_Z^{\mbox{\tiny SM}}(p^2)$ does not depend on $\Lr$.  Therefore, one is allowed to choose
$\Lr= \Lr^\prime$ in each of the four sub-amplitudes. But this implies $F_1= 0$
in \eqref{eq:def}, which means ${p^2}/{g^2}= 0$ inside the function $P_Z(p^2)$ contained in the definition of the dressed propagators $\Delta_A(p^2)$, $\Delta_Z(p^2)$ and  $\Delta_{ZA}(p^2)$ in \eqref{eq:subamplitudes}. Since this is the only difference between the results computed within the nonrenormalizable and renormalizable models,
one obtains
\bqa
\label{eq:aeffn}
A_k^{\mbox{\tiny EFF}}(p^2,\Lr^\prime)= A_k^{\mbox{\tiny SM}}(p^2,\Lr^\prime)\hskip 10pt
\forall k. 
\eqa
Thus,
\bqa
A_Z^{\mbox{\tiny EFF}}(p^2,\Lr^\prime)=\aln
 \sum_{k=1}^4 A_k^{\mbox{\tiny EFF}}(p^2,\Lr^\prime) \nl 
=\aln\sum_{k=1}^4 A_k^{\mbox{\tiny SM}}(p^2,\Lr^\prime)
= A_Z^{\mbox{\tiny SM}}(p^2).
\eqa
\end{proof}
An interesting consequence is
\begin{corollary}
 \label{cor:cor1}
In the renormalizable theory of \eqref{eq:Lren} it is possible to rearrange the fermion-loop corrections in such a way that all fermions couple to $Z$ and $W$ bosons with the same V-A interaction.
\end{corollary}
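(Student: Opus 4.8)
The plan is to exploit the freedom, established in the proof of Theorem~\ref{th:2}, of evaluating each dressed propagator in \eqref{eq:subamplitudes} at $\Lr = \Lr^\prime$, which sets $F_1 = 0$ and hence kills the $p^2/g^2$ term inside $P_Z(p^2)$ and $P_A(p^2)$. Because $A_Z^{\mbox{\tiny SM}}$ is $\Lr$-independent, this reshuffling of the fermion-loop corrections changes nothing physical, but it produces a representation in which the propagator structure is drastically simplified. First I would write out $\Delta_A$, $\Delta_Z$ and $\Delta_{ZA}$ explicitly at $F_1 = 0$, using \eqref{eq:dyson}--\eqref{eq:dyson1}, and observe that the $Z$ propagator collapses to a purely mass-like denominator while the photon and mixing propagators lose their kinetic terms; in this limit the whole four-sub-amplitude sum in \eqref{eq:subamplitudes} reorganizes into effective current--current contractions of the Fermi type.

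The second step is to collect the coefficients of the four Dirac bilinear structures $\gamma_\alpha \otimes \gamma^\alpha$, $\gamma_\alpha\gamma_5 \otimes \gamma^\alpha$, $\gamma_\alpha \otimes \gamma^\alpha\gamma_5$ and $\gamma_\alpha\gamma_5 \otimes \gamma^\alpha\gamma_5$ that appear once $v_f$, $a_f$ from \eqref{eq:Zcopulings} are substituted and the photon/$ZA$ pieces are added in. The claim to be verified is that, after using the fitting relations \eqref{eq:fitting} (equivalently \eqref{eq:sol}) to trade $g^2 s^2_\theta$, $g^2/M^2$ and $s^2_\theta$ for the measured inputs, the surviving combination can be recast so that the dependence on $Q_f$, $v_f$, $a_f$ factorizes into a universal $(1-\gamma_5)$-type vertex on each fermion line, with a single common effective coupling, exactly mirroring the charged-current structure $\Gamma = \gamma_\alpha(1-\gamma_5)\otimes\gamma^\alpha(1-\gamma_5)$ of \eqref{eq:A0}. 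Concretely, I expect the $F_1=0$ photon contribution $A_1$ and the mixing contributions $A_3,A_4$ to conspire with the charge-dependent part $-2 s^2_\theta Q_f$ of $v_f$ so as to cancel the vectorlike, non-universal remainder, leaving only $a_f = -I_{3f}$ and the $I_{3f}$-proportional part of $v_f$, i.e. a pure left-handed coupling identical for every fermion up to the overall normalization already fixed by $G_F$.

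The third step is bookkeeping: having obtained a universal V$-$A vertex, I would state the rearranged Lagrangian/amplitude and note that, since $A_Z^{\mbox{\tiny SM}}(p^2)$ was $\Lr$-independent to begin with, the rearrangement is an identity valid to all orders in the resummed one-fermion-loop expansion and at arbitrary $p^2$; no approximation has been made beyond those already in force (massless external fermions, metric-tensor parts only). The main obstacle I anticipate is the second step: showing that the charge-dependent and mixing pieces genuinely cancel against the $s^2_\theta Q_f$ term in $v_f$ requires the $F_1=0$ values of $\Delta_A$, $\Delta_{ZA}$ and $\Delta_Z$ to stand in a very specific ratio, and this ratio is guaranteed only because the \emph{same} $\Pi_A$, $\Pi_{ZA}$ enter ${\cal Z}(p^2)$ and $P_A(p^2)$; tracking that this algebra closes — rather than leaving a residual $p^2$-dependent vectorlike term — is where the real content of the corollary lies. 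A clean way to organize it is to note that at $F_1=0$ one has $P_A(p^2) = F_1 = 0$-compatible simplifications that make $s_\theta^2\Pi_A$, $\Pi_{ZA}$ enter only through the combinations already appearing in $S^2$, so that the cancellation is the amplitude-level shadow of the tree-level relation $S^2 = \hat s_\theta^2$ used in \eqref{eq:sol}.
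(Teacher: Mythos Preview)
Your plan substantially overcomplicates the argument and, more importantly, aims at the wrong target. The corollary asserts that the \emph{couplings to $Z$ and $W$} can be made pure V--A, not that the full neutral-current amplitude $\sum_k A_k^{\mbox{\tiny SM}}$ is V--A. The latter is in fact false: the photon sub-amplitude $A_1^{\mbox{\tiny SM}}$ and the $Z$--$\gamma$ mixing pieces $A_{3,4}^{\mbox{\tiny SM}}$ carry a pure vector coupling $\gamma_\alpha$ on at least one fermion line, and at $\Lr=\Lr^\prime$ these contributions remain finite and nonzero (from \eqref{eq:sol:1}--\eqref{eq:sol:2} one has $g^2 s_\theta^2 = 4\pi\aem/F_2$, which does not vanish when $F_1=0$). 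So the cancellation you anticipate in your second step --- that $A_1,A_3,A_4$ absorb the $-2 s_\theta^2 Q_f$ piece of $v_f$ and leave a global $(1-\gamma_5)\otimes(1-\gamma_5)$ structure --- does not and cannot happen.

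The paper's proof is a one-liner at the level of the bare parameters, not the sub-amplitudes. Setting $\Lr=\Lr^\prime$ gives $F_1=0$, and then \eqref{eq:sol:1} reads $s_\theta^2 = \hat s_\theta^2\, F_1/F_2 = 0$. Inserting $s_\theta^2=0$ directly into \eqref{eq:Zcopulings} yields $v_f = I_{3f} = -a_f$, so the $Z$ vertex in \eqref{eq:Lrenint} becomes $I_{3f}\,\gamma^\alpha(1-\gamma_5)$, matching the $W$ vertex up to normalization. That is the whole argument: the ``rearrangement of the fermion-loop corrections'' is nothing more than the $\Lr$-dependence of the bare $s_\theta^2$ through the fitting solution \eqref{eq:sol}, and the special choice $\Lr=\Lr^\prime$ drives that bare parameter to zero. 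No amplitude-level bookkeeping and no conspiracy among $\Delta_A,\Delta_{ZA},\Delta_Z$ are needed. (Incidentally, $P_A(p^2)$ in \eqref{eq:dyson1} contains no $p^2/g^2$ term, so the opening claim of your first step is also inaccurate.)
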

\begin{proof}
This is again obtained by choosing $\mur$ in \eqref{eq:subamplitudes} as in \eqref{eq:musol}, that implies
$s_\theta^2= 0$ in \eqref{eq:sol:1} and $v_f= -a_f = I_{3f}$ in \eqref{eq:Zcopulings}. 
\end{proof}

To summarize, any exact amplitude, in which two massless fermion lines are connected by a one-fermion-loop dressed $W$, $Z$ or $\gamma$ propagator of arbitrary
virtuality, is reproduced by ${\cal L}^{\mbox{\tiny  EFF}}_{\mbox{\tiny INT}}$ if the solution in \eqref{eq:musol} is used for the renormalization scale.

 Finally, it should be explicitly noticed that the choice of the interactions included in \eqref{eq:Lnonren} is ultimately driven by the requirement that the effective and the exact model coincide, when $\lambda \to 0$, for the class of processes and corrections under study. For example, ${\cal L}^{\mbox{\tiny  EFF}}_{\mbox{\tiny INT}}$ is too poor to accommodate contributions not induced by fermion loops, e.g. the amplitudes $B_i$ in the l.h.s. of \eqref{eq:matchtheo0} would not match the $A_i$ if the latter would involve three-gauge-boson vertices.

\section{Comparing with customary calculations}
In what follows, we use the model of \eqref{eq:Lnonren} to compare our treatment with a more standard order-by-order renormalization approach based on Dimensional Regularization (DReg). Our formulae are converted to DReg by replacing \cite{Gnendiger:2017pys}
\bqa
\label{eq:FDRvsDREG}
\Lr \to \Lr+\frac{1}{\epsilon_{\mbox{\tiny  UV}}},
\eqa
where
\bqa
\frac{1}{\epsilon_{\mbox{\tiny  UV}}} := \frac{2}{4-d}-\gamma_E -\ln \pi
\hspace{7pt} {\rm with} \hspace{10pt} d \to 4.
\eqa
Upon this substitution, the effective amplitudes in \eqref{eq:aeff} and \eqref{eq:aeffn} develop a dependence on the UV cutoff ${1}/{\epsilon_{\mbox{\tiny  UV}}}$.
To cancel it in the Weinberg's way, one adds to the effective Lagrangian interactions induced by higher dimensional operators,
\bqa
\label{eq:LCT}
 {\cal L}_{\mbox{\tiny  HD}}=\aln
 -c_w \frac{g^4}{32 M^4}(\partial_\nu {J}_{\!c\alpha})^\dag (\partial^\nu J_{c}^\alpha) \nl \aln
 -c_z \frac{g^4 c^2_\theta}{32 M^4}(\partial_\nu J_{n\alpha}) (\partial^\nu J_{n}^\alpha).
 \eqa
 Matching the exact results of \eqref{eq:asm} and \eqref{eq:smnfull} onto a computation performed with
${\cal L}^{\mbox{\tiny  EFF}}_{\mbox{\tiny INT}}+{\cal L}_{\mbox{\tiny  HD}}$ fixes the unknown coefficients,
\bqa
 c_w(\Lr) = c_z(\Lr)= \frac{{\hat s_\theta}^2 }{\pi \aem }+
 \frac{1}{\pi^2} \left(K_1-\Lr-\frac{1}{\epsilon_{\mbox{\tiny  UV}}} \right). \nonumber
 \eqa
 Even when choosing $\mur$ as in \eqref{eq:musol} only the finite parts of $c_{w,z}$ are removed,
 \bqa
 c_{w,z}(\Lr^\prime)= -\frac{1}{\pi^2\epsilon_{\mbox{\tiny  UV}}},
 \eqa
 hence adding ${\cal L}_{\mbox{\tiny  HD}}$ to 
 ${\cal L}^{\mbox{\tiny  EFF}}_{\mbox{\tiny INT}}$ is necessary to compensate the UV poles contained  in the DReg variant of the one-loop functions of \eqref{eq:loopint}. Such poles are absent when defining UV divergent integrals as in \eqref{eq:fdrintegral1}. This explains why FDR circumvents the introduction of the counterterm Lagrangian ${\cal L}_{\mbox{\tiny  HD}}$, which is instead needed in the standard method.
 \footnote{Note that FDR is not equivalent to DReg in which the loop integrals are redefined by dropping  $1/\epsilon_{\mbox{\tiny  UV}}$ terms. For instance, \cite{Donati:2013voa,tHooft:1973wag} when $\ell > 1$
 \bqa
 \aln
 {\rm Finite\,Part}
 \left\{
\int {d^dq}\, \frac{\mur^{(4-d)}}{(q^2-m^2)((q+p)^2-m^2_1)}
\right\}^\ell
\nl
 \aln \hskip 15pt \ne
 \left(
 I^1_{\scriptscriptstyle \rm FDR}(p^2,m^2,m^2_1)
  \right)^\ell, \nonumber
 \eqa
 with $I^1_{\scriptscriptstyle \rm FDR}(p^2,m^2,m^2_1)$ given in \eqref{eq:loopint:1}.
 In DReg this mismatch is cured by the $1/\epsilon_{\mbox{\tiny  UV}}$ pole contained in ${\cal L}_{\mbox{\tiny  HD}}$. Hence, setting ${\cal L}_{\mbox{\tiny  HD}}=0$ would give a wrong DReg result for the resummed propagators of \eqref{eq:dyson}.
 }
 It is also interesting to speculate about the FDR matching of \eqref{eq:musol} from the point of view of the sole EFT. In particular, would it be possible to guess the ``right'' value of $\mur$ without knowing ${\cal L}^{\mbox{\tiny  SM}}_{\mbox{\tiny INT}}$? Requiring that ${\cal L}^{\mbox{\tiny  EFF}}_{\mbox{\tiny INT}}$ describe as many processes as possible leads to the universal V-A interaction realized by the value $s^2_\theta= 0$ implied by \eqref{eq:musol}, as noted in corollary \ref{cor:cor1}. More than that, choosing $s^2_\theta= 0$ effectively reduces from three to two the number of free parameters in \eqref{eq:sol}.  In summary, minimality could be used as a criterion to fix $\mur$ in nonrenormalizable QFTs whose UV completion is unknown. Note that, in any standard procedure based on DReg, $s^2_\theta$ would be a bare parameter containing ${1}/{\epsilon_{\mbox{\tiny  UV}}}$ poles, which cannot be compensated by any finite value of $\mur$. Thus, setting $s^2_\theta= 0$ directly in \eqref{eq:sol} would not be possible.

 In the rest of this section we briefly outline the steps towards a possible generalization of our approach beyond the simple model of \eqref{eq:Lnonren}. Given the current interest in precise EFT analyses of collider data, we directly focus on a phenomenologically relevant problem by studying how new physics effects could be parameterized within the FDR framework at the NLO accuracy. \footnote{This means including all corrections ${\cal O}(g^2)$, ${\cal O}(\lambda_n)$ and ${\cal O}(g^2 \lambda_n)$ with respect to the lowest order standard model predictions.} To be definite, we consider the Lagrangian
\bqa
\label{eq:L6}
{\cal L}_{\mbox{\tiny NP}}= {\cal L}^{(4)}_{\mbox{\tiny SM}}+\frac{g^2}{\Lambda^2}{\cal L}^{(6)},
\eqa
where ${\cal L}^{(4)}_{\mbox{\tiny SM}}$ is the full standard model bare Lagrangian and $g$ is the ${\rm SU(2)_L}$ coupling constant.
${\cal L}^{(6)}$ contains a set of gauge invariant dimension-six operators, multiplied by Wilson coefficients, which we want to determine, and $\Lambda$ is the new physics scale, with which all the $M_n$ in \eqref{eq:lambda} are identified. The reader should be aware of the fact that a systematic and detailed treatment of this problem is far beyond our scope. Here we simply want to point out the general qualitative differences with respect to more standard approaches. 

Equation \eqref{eq:L6} very much resembles the customary SMEFT \cite{Brivio:2017vri} dimension-six parameterization. However, in our case the operators in ${\cal L}^{(6)}$ are not necessarily closed under renormalization. For instance, they could be a sub-set of the operators of the Warsaw basis \cite{Grzadkowski:2010es}. Furthermore, ${\cal L}_{\mbox{\tiny NP}}$ remains the same at all loop orders (see footnote \ref{foot:1}).
Before starting the calculation, one needs to expand ${\cal L}^{(6)}$ around the Higgs vacuum expectation value $v$.
This gives rise to powers of $v/\Lambda$ that modify the relations connecting 
weak eigenstates to mass eigenstates and alter the gauge fixing needed to quantize ${\cal L}_{\mbox{\tiny NP}}$.
An analogous problem is encountered in the SMEFT, and can be solved, for instance, as described in \cite{Helset:2018fgq}. \footnote{Alternatively, since our matching conditions only involve physical amplitudes, one can use any gauge expressed in terms of the bare fields in ${\cal L}^{(4)}_{\mbox{\tiny SM}}$, at the price of correcting the external particle wave functions such that propagators have residue one at their poles \cite{Passarino:2016saj}.} A difference arises when the $v/\Lambda$ terms generate contact interactions not present in
${\cal L}^{(4)}_{\mbox{\tiny SM}}$. In this case they should be included in the factor $K(\alpha)$ of \eqref{eq:lemat}. This is due to the fact that the expansion in \eqref{eq:eqABb} is in terms of the $\lambda_n$. 
 
The starting point to determine the Wilson coefficients and $\Lambda$ is a set of observables
${\cal O}_i$, $i \ge m+1$, for which there is an experimental agreement, when
all $\lambda_n \to 0$,  with the theoretical predictions obtained with ${\cal L}_{\mbox{\tiny NP}}$. \footnote{Adding real corrections might be needed at this stage to define infrared safe quantities.}
This may require to fit different compositions of the dimension-six operators in ${\cal L}^{(6)}$ until this agreement is reached.
After this is achieved, one measures one of the observables, say
${\cal O}_{m+1}$, at small values of the $\lambda_n$ and tries to determine $X^\prime_{-1,0}$ in \eqref{eq:musol0} such that the agreement persist also when $\lambda_n \ne 0$. Note that, when several $\lambda_n$ are involved, this may require measuring
${\cal O}_{m+1}$ in different phase-space regions.
If the $\lambda_n \ne 0$ agreement is not reached, one is led to reconsider once again the combination of dimension-six operators in ${\cal L}^{(6)}$.
When $X^\prime_{-1}$ and $X^\prime_0$ can be found, the theory is fixed and our conjecture states that all the other observables ${\cal O}_{i}$, $i > m+1$, are also reproduced by ${\cal L}_{\mbox{\tiny NP}}$. If necessary, this can be checked experimentally.

\section{Conclusion}
We have derived the order-by-order conditions which have to be fulfilled by effective amplitudes computed in FDR to reproduce exact high-energy predictions.
In our procedure the Lagrangian of the effective model is not modified by the inclusion of higher dimensional operators.
At the core of our analysis lies an expansion of the renormalization scale $\mur$ that mixes different perturbative orders.

We have postulated that if there exist classes of amplitudes for which the effective and the exact theory coincide at low energies, and if a value of $\mur$ can be found, for one of them, that matches at higher energies the exact result onto the effective one, all the other effective amplitudes computed at $\mur$ reproduce the exact high-energy predictions. 

We have proven this explicitly to all loop orders by matching onto the Fermi model electroweak processes induced by the exchange of a one-fermion-loop dressed $W$, $Z$ or $\gamma$ propagator of arbitrary virtuality. In such a situation our approach is more direct than a standard EFT calculation, and gives some hints on how to handle nonrenormalizable models when more fundamental theories are not known.

We plan to corroborate our conjecture by considering further classes of
theories and corrections in future investigations.

\begin{acknowledgements}
I acknowledge the financial support of the MINECO project FPA2016-78220-C3-3-P
and the hospitality of the CERN TH department during the completion of this work. I also thank Giampiero Passarino for informative discussions on the SMEFT.
\end{acknowledgements}

\bibliographystyle{spphys}       
\bibliography{fdrmatching}   

\end{document}